\documentclass{acm_proc_article-sp}
\usepackage{amssymb}
\usepackage{amsmath}
\usepackage{graphicx}

\newcommand\be{\begin{equation}} 
\newcommand\ee{\end{equation}} 
\newcommand\belabel{\begin{equation}}
\newcommand\eelabel{\end{equation}}
\newcommand\xsi{X_i(S_i)}

\newcommand\xhi{X_{hi}(S_i)}

\newcommand\kk{\mathbf{k}}
\newcommand\XX{\mathbf{X}}
\newcommand\SSS{\mathbf{S}}
\newcommand\SX{\SSS,\XX}
\newcommand\rksx{r(\kk, \SSS, \XX)}

\newcommand\ri{R_i}
\newcommand\riest{\hat{R}_i}
\newcommand\diest{\hat{D}_i}

\begin{document}

\title{Portfolio Allocation for Sellers in Online Advertising}

\numberofauthors{4} 
%
\author{
%
%
\alignauthor
Ragavendran Gopalakrishnan\\
      \affaddr{Xerox Research}\\
      \email{ragad3@caltech.edu}
\alignauthor
Eric Bax\\
      \affaddr{Yahoo}\\
       \email{ebax@yahoo-inc.com}
\alignauthor
Krishna Prasad Chitrapura\\
      \affaddr{Qikwell Technologies}\\
       \email{kp@qikwell.com}
\and
\alignauthor Sachin Garg\\
      \affaddr{American Express}
      \email{sachin.garg@aexp.com}
}

\maketitle

\begin{abstract}
In markets for online advertising, some advertisers pay only when users respond to ads. So publishers estimate ad response rates and multiply by advertiser bids to estimate expected revenue for showing ads. Since these estimates may be inaccurate, the publisher risks not selecting the ad for each ad call that would maximize revenue. The variance of revenue can be decomposed into two components -- variance due to `uncertainty'  because the true response rate is unknown, and variance due to `randomness' because realized response statistics fluctuate around the true response rate. Over a sequence of many ad calls, the variance due to randomness nearly vanishes due to the law of large numbers. However, the variance due to uncertainty doesn't diminish. 

We introduce a technique for ad selection that augments existing estimation and explore-exploit methods. The technique uses methods from portfolio optimization to produce a distribution over ads rather than selecting the single ad that maximizes estimated expected revenue. Over a sequence of similar ad calls, ads are selected according to the distribution. This approach decreases the effects of uncertainty and increases revenue. 
\keywords{online advertising, portfolio allocation, uncertainty, randomness}
\end{abstract}

\section{Introduction}
In recent years, online advertising has emerged as a hugely profitable industry for publishers and advertisers alike on the Internet. Today's online advertiser faces two choices for placing their ads -- display advertising and search advertising. Display advertising is most similar to traditional advertising, in that ads are placed in popular websites alongside the published content. An example would be an airline ad on the travel section of a news website. Search advertising, on the other hand, involves ads being shown alongside search results on certain keywords, on search websites. An example would be an airline ad alongside the list of search results for the keyword `travel'. Online publishers (in both display and search advertising) want to maximize the revenue generated from selling real estate (`impressions' or `ad calls') on their websites, to advertisers.

In the traditional model, advertisers pay publishers just for showing their ads. This is the CPM (cost per mille) model, where advertisers incur a cost per thousand impressions for which their ads are selected. In this model, the entire advertising risk is borne by the advertisers, because the publishers are guaranteed revenue for showing ads. Alternative models have since emerged, that split the risk more evenly between the publisher and the advertiser. For example, in the CPC (cost per click) model, an advertiser pays only when a user clicks on their ad, causing their landing page to be displayed in the user's web browser. Clearly, the risk is split -- a low click volume is bad for the publisher, but minimizes the advertiser's costs, whereas a high click volume with poor conversion is great for the publisher, but bad for the advertiser due to low return on investment. The CPA (cost per action) model is the other extreme, where the entire risk is borne by the publisher -- an advertiser typically pays only when a user completes an action that guarantees a conversion. Examples include users navigating to a specified page on the advertiser's web site, filling out a web form that submits contact information to the advertiser (called lead generation), and completing an online purchase. For more about online advertising markets, refer to Varian \cite{varian06,varian09}, Edelman et al. \cite{edelman07}, and Lahaie and Pennock \cite{lahaie07}.

Advertisers submit bids for their ads to the publisher, indicating how much they are willing to pay. A publisher usually partitions available ad calls into different segments, called markets, and select ads to be shown for each market. When a publisher is faced with several ads to choose from, the revenue optimizing choice is most obvious if all ads are CPM ads -- there is a wide range of literature that deals with optimal auctions, refer to Riley and Samuelson \cite{riley81}, Myerson \cite{myerson81} and the textbook by Krishna \cite{krishna09}; most notable among these is the generalized second price auction, first analyzed by Edelman et al. \cite{edelman07} and Varian \cite{varian06}. If CPC and/or CPA ads are involved, then the function that is optimized by these auction mechanisms is the expected revenue, which is the advertiser payment times user response rate (commonly referred to as clickthrough rate (CTR) in the context of CPC ads). Since the response rates of the ads may not be known accurately, the expected revenue has to be estimated. We note that the mechanism used by the publisher to select ads affects the incentives of the competing advertisers, which in turn affects their bidding behavior. While we do not consider these effects in this paper, in \cite{vcg_portfolio}, Li et al. discuss a pricing mechanism for portfolio allocation, taking this into account.

Selecting a single ad that would result in the maximum estimated expected revenue and showing it on all ad calls in a market is risky for the publisher, because the actual revenue may be less than the estimated expected revenue. To our knowledge, most existing techniques for reducing this risk use samples from historical performance data for the ads to obtain improved response rate estimates, see Richardson et al. \cite{richardson07}, Agarwal et al. \cite{agarwal09c} and Graepel et al. \cite{graepel10}. Even though such methods hold much merit, there is an inherent difficulty involved in `learning' the response rates in this manner, especially when these response rates are very small, which is usually the case, see \cite{comscore08}.

In this paper, we outline a new approach for selecting ads that can augment existing techniques that involve intelligent learning of the response rates, to provide better risk management for the publisher. This approach involves selecting a portfolio of ads to share the ad calls in the market, so as to reduce the variance of revenue. We use techniques from portfolio optimization for this purpose. Portfolio optimization itself is not a new idea; it has existed in the world of finance for decades. For more on portfolio allocation techniques, refer to the text by Fabozzi \cite{fabozzi07}, or the papers by Markovitz \cite{markovitz52}, Lintner \cite{lintner65}, Sharpe \cite{sharpe64}, and Tobin \cite{tobin58}.

The variance in revenue is due to two factors which we refer to in this paper, as `undertainty' and `randomness'. Uncertainty refers to the fact the true response rates are not known and are estimated. Randomness refers to the fact that the response rate, by definition, is still an `average' -- it represents the probability of eliciting a response, and the realized response statistics will fluctuate around it. The variance due to randomness diminishes according to the law of large numbers, as the number of allocated ad calls increases (which happens over a large time period). The same cannot be said for the variance due to uncertainty. Portfolio allocation specifically targets the component of variance due to uncertainty by diversifying, i.e., spreading the ad calls over multiple ads. As a side-effect of reducing risk, our simulations show that there is potential for increasing the actual (realized) revenue, so even risk-neutral publishers benefit from this approach.

Section \ref{formal_model} establishes a formal model. Section \ref{optimal_allocation} describes how to optimize for a combination of estimated expectation and variance of revenue in our formal model. Section \ref{var_analysis} analyzes the components of variance due to uncertainty and randomness, and the impact of diversification. Section \ref{simulations} uses simulations to illustrate that using portfolio allocation has the potential to increase actual revenue. Section \ref{discussion} concludes with a discussion of directions for future work.

\section{Formal Model} \label{formal_model}
Let $m$ be the number of ad calls and let $n$ be the number of ads in a market. An allocation vector $\kk = (k_1, \ldots, k_n)$ specifies the number of ad calls to allocate to each ad. The goal is to select an optimal allocation vector $\kk^*$ that mediates a tradeoff between maximizing expected revenue and minimizing variance of revenue.

Assume each ad $i$ in the market is generated by a distribution over possible ads, each of which has a response rate. Let $S_i$ be the random variable that denotes the response rate of ad i. Let $R_i$ be the distribution of $S_i$. Let $X_i(S_i)$ be the random variable that denotes the revenue from showing ad $i$ on an ad call. For example, if an advertiser pays his bid $b$ only when the ad elicits a response, then

\be
X_i(S_i) = \left\{ \begin{array}{c} \hbox{$b$ with probability $S_i$} \\ \hbox{$0$ with probability $1-S_i$} \\ \end{array} \right. .
\ee
\noindent

Define random variables $\xhi$, for $h$ in $\{1, \ldots , m\}$ and $i$ in $\{1, \ldots, n\}$, to be the revenue $\xsi$ if ad call $h$ is allocated to ad $i$. (Random variables $\xhi$ are independent copies of $\xsi$.) Response rate $S_i$ is drawn once (according to $R_i$) and determines a distribution for revenue for all copies of $\xsi$: $X_{1i}, \ldots, X_{mi}$. But $\xhi$ is redrawn i.i.d. according to that distribution for each ad call $h$. Think of it as drawing a coin from a bag of coins to determine the response probability $S_i$ for ad $i$, then tossing that same coin once for each ad call to determine the values $X_{1i}, \ldots, X_{mi}$. In terms of this notation, `uncertainty' is tied to the distribution $R_i$ that generates the response rates, whereas `randomness' is tied to the distribution $\xsi$ that determines the revenue for a given response rate.

The goal is to select an allocation $\kk^*$ to maximize expected return subject to controls on variance of returns. The expectation and variance are over $\SSS = (S_1, \ldots, S_n)$ and $\XX = (X_{11}, \ldots , X_{1n}, \ldots , X_{m1}, \ldots , X_{mn})$. However, $\SSS$ and $\XX$ are unknown at the time of portfolio allocation, so their statistics must be estimated.

In the sections that follow, expectations, variances, and covariances are over the distributions of the random variables in subscripts. For example, $E_{\SSS}$ is expectation over the distribution of $\SSS$. Similarly, $Var_{\SX}$ is variance over the joint distribution of $(\SX)$.

\section{Estimated Optimal Allocation} \label{optimal_allocation}
Let $\kk$ be an allocation. Assume, without loss of generality, that the first $k_1$ ad calls are allocated to ad 1, the next $k_2$ are allocated to ad 2, and so on. Then the revenue for allocation $\kk$ is

\be
\rksx = \sum_{i=1}^{n} \sum_{h=k_1 + \ldots + k_{i-1}+1}^{k_1 + \ldots + k_i} \xhi.
\ee
\noindent
So the allocation optimization problem (AOP) is:

\be
\max_{\kk} E_{\SX} \rksx
\ee
\noindent
subject to

\be
Var_{\SX} \rksx \leq d,
\ee
\noindent
where $d$ is a specified bound on variance, and

\be
\forall i: k_i \geq 0, \hbox{ and } \sum_{i=1}^{n} k_i = m.
\ee
\noindent
Since expectations are linear, the expected revenue is

\belabel
E_{\SX} \rksx = \sum_{i=1}^{n} k_i E_{S_i, X_i} \xsi  \label{expectation}
\eelabel
\noindent
The variance of revenue is:

\be
Var_{\SX} \rksx 
\ee
\be
= \sum_{i=1}^{n} \sum_{j=1}^{n} k_i k_j Cov_{S_i,S_j}[E_{X_i} X_i(S_i), E_{X_j} X_j(S_j)] 
\ee
\belabel
+ \sum_{i=1}^{n} k_i E_{S_i} Var_{X_i} \xsi. \label{var}
\eelabel
\noindent
(See Appendix \ref{payoff_variance} for the proof.)

Define matrix $\mathbf{A}$ as

\be
a_{ij} = Cov_{S_i,S_j}[E_{X_i} X_i(S_i), E_{X_j} X_j(S_j)],
\ee
\noindent
 and define vectors $\mathbf{b}$ and $\mathbf{c}$:

 \be
 b_i = E_{S_i} [ Var_{X_i} \xsi ] \hbox{ and } c_i = E_{S_i, X_i} \xsi.
 \ee

(Please excuse the abuse of notation: the symbol $b$ represents an advertiser's bid elsewhere in this paper.) We now relax the constraint that $k_i$ can take only integral values. Then, the allocation optimization problem can be stated as

 \be
 \max_{\kk} \mathbf{c}^{T} \kk
 \ee

 subject to

 \be
 \kk^{T} \mathbf{A} \kk + \mathbf{b}^{T} \kk \leq d,
 \ee

 \be
 \kk \geq \mathbf{0} \hbox{ and } \mathbf{1}^T \kk = m.
 \ee

Call this the matrix allocation problem (MAP). This is a convex quadratic programming problem, which can be solved by any of a number of available quadratic programming (QP) solvers, employing techniques such as Wolfe's method \cite{wolfe59}, which is covered by Franklin \cite{franklin80} and other texts on linear and nonlinear programming.

An alternative formulation uses a parameter $q \in [0,\infty)$ to express how much to weight average returns versus variance. Solve the problem

\be
\min_{\kk}  \kk^{T} \mathbf{A} \kk + \mathbf{b}^{T} \kk - q \mathbf{c}^{T} \kk
\ee
\noindent
subject to

\be
 \kk \geq \mathbf{0} \hbox{ and } \mathbf{1}^T \kk = 1.
 \ee

Call this the $q$-weighted matrix allocation problem (QMAP). This convex quadratic programming problem is in a form that is convenient for many QP solvers. (For general background on allocation problems, refer to Franklin \cite{franklin80} or another text on mathematical programming and optimization.)

\section{Analysis of Variance of Revenue} \label{var_analysis}
The ad call allocation problems MAP and QMAP have the same form as the standard portfolio allocation problem in finance. In finance, an investor seeks to allocate funds among investments, with the goals of achieving high expected returns and low variance of returns. In online advertising, a risk-averse publisher seeks to allocate ad calls among ads, with the goals of achieving high expected revenue and low variance of revenue.

Like financial investors selecting a portfolio, using MAP and QMAP causes revenue-seeking, risk-averse publishers to:

\begin{enumerate}
\item Allocate more ad calls to ads that have higher expected revenue.
\item Allocate more ad calls to ads that have lower variance of revenue.
\item Diversify: spread ad calls over multiple ads.
\end{enumerate}

\subsection{Variance of revenue due to a single ad}
To begin, we focus on variance of returns due to a single ad $i$ being allocated to $k_i$ ad calls in the market. To focus on individual allocations, let us assume for now that covariances with respect to all $S_i$ and $S_j$ are zero: $\forall i \not= j, a_{ij}=0$. (This occurs when ad response rates are estimated independently.)

In a portfolio allocation, the portion of variance in revenue due to ad $i$ is then given by

\be
Var_i = k_i^2 Var_{S_i} [E_{X_i} X_i(S_i)] + k_i E_{S_i} [ Var_{X_i} X_i(S_i) ]. \label{var_i}
\ee

The first term is variance due to uncertainty. The second term is variance due to randomness. Variance due to uncertainty scales with the square of allocated ad calls $k_i$. Variance due to randomness scales linearly. Uncertainty increases with allocation size because the actual response rate $S_i$ is drawn once and applies to all ad calls allocated to ad $i$, making their revenues correlated. In contrast, deviations in revenue due to differences between average and realized response rates are independent from ad call to ad call.

We simplify the notation for the ensuing discussion as follows. Let

\begin{itemize}
\item $k$ (instead of $k_i$) count ad calls allocated to ad $i$,
\item $b$ be the advertiser's bid per response,
\item $p$ be the ad's (unknown) response rate, and
\item $\sigma$ be the standard deviation of error in estimating $p$.
\end{itemize}

Assuming an unbiased estimate of $p$, and assuming that the advertiser pays his bid when a user responds to his ad, Equation (\ref{var_i}):

\be
Var_i \approx k^2 b^2 \sigma^2 + k b^2 p (1-p), \label{var_i_approx}
\ee

Expected revenue per ad call is $bp$. Let $c$ be the expected revenue required for the ad's offer to be competitive. Then $b$ needs to be at least $\frac{c}{p}$. Substitute into Approximation (\ref{var_i_approx}):

\be
Var_i \approx \frac{k^2 c^2 \sigma^2}{p^2} + \frac{k c^2 (1-p)}{p}. \label{via_2}
\ee

As ads are shown to users and response data is collected, uncertainty about response rates decreases. Suppose an ad has actual response rate $p$ and obtains $u$ responses from being allocated to $v$ ad calls. Treating each ad call as a Bernoulli trial \cite{feller68} with success probability $p$, $\frac{u}{v}$ is an unbiased estimator of $p$, with standard deviation

\be
\sigma = \sqrt{\frac{p(1-p)}{v}}.
\ee

Substitute $\sigma= \sqrt{\frac{p(1-p)}{v}}$ into Approximation (\ref{via_2}):

\be
Var_i \approx \frac{k^2 c^2 (1-p)}{v p} + \frac{k c^2 (1-p)}{p}. \label{via_3}
\ee

Since the first term accounts for uncertainty and the second for randomness, the ratio of uncertainty to randomness is about $k:v$. For example, if the number of learning ad calls is about nine times the current-session allocation $k$, then uncertainty accounts for about $10\%$ of the variance in revenue. In general, an ad contributes more variance to revenue when its allocation $k$ is larger, when the response rate $p$ is smaller, and when fewer learning ad calls $v$ have been used to estimate the response rate.

\subsection{Effect of diversification on variance of revenue}
Suppose there are $r$ ads available, with independent and identical sets of distributions $S_i$ and $X_i(S_i)$. Then all allocations $(k_1, \ldots, k_r)$ of $k$ ad calls have the same expected revenue. Independence implies

\be
\forall i \not= j: k_i k_j Cov_{S_i,S_j}[E_{X_i} X_i(S_i), E_{X_j} X_j(S_j)]  = 0.
\ee

So the variance of revenue is

\be
\sum_{i=1}^{r} \left( k_i^2 Var_{S_i} [E_{X_i} X_i(S_i)] + k_i E_{S_i} [ Var_{X_i} X_i(S_i) ] \right) .
\ee

Let the uncertainty terms

\be
Var_{S_1} [E_{X_1} X_1(S_1)] = \ldots = Var_{S_r} [E_{X_r} X_r(S_r)] = \alpha.
\ee

Let the randomness terms

\be
E_{S_1} [Var_{X_1} X_1(S_1) ] = \ldots = E_{S_r} [ Var_{X_r} X_r(S_r) ] = \beta.
\ee

If all ad calls are allocated to a single ad, then the variance is $k^2 \alpha + k \beta$. If the ad calls are distributed uniformly over the ads, then the variance is $\frac{1}{r} k^2 \alpha + k \beta$. So diversification reduces variance caused by uncertainty.

\subsection{Covariance}
Uncertainty is completely correlated among ad calls allocated to the same ad, and it may also be correlated between different ads. In practice, uncertainty becomes correlated when ads ``share" learning. For example, in a tree-based model for learning response rates, the ad calls and responses for each ad may influence response rate estimates for other ads in the same branch of the tree. As a result, differences between estimated and actual expected response rates are likely to become correlated for ads that are neighbors in the tree.

Empirical data can be used to estimate covariance of expected returns among ads. A model for the covariance can be based on whether ads share a branch or sub-branch in a tree model for response rate estimation (see Agarwal et al. \cite{agarwal07}, Dudik et al. \cite{dudik07}, and Gelman and Hill \cite{gelman07}), are in the same cluster in a cluster model (see Regelson and Fain \cite{regelson06}), have similar scores for factors in a factor-based model (see Agarwal and Chen \cite{agarwal09b}, Weinberger et al. \cite{weinberger09}, and Richardson et al. \cite{richardson07}), or use the same rules in a rule-based model (see Dembczynski et al. \cite{dembczynski08}). The model for covariance can be trained by using converged response rate estimates for a population of experienced ads as proxies for actual response rates and observing how differences between early estimates and converged estimates are correlated for ads that ``share" learning.

\section{Simulations} \label{simulations}
This section uses simulations to show that using portfolio allocation to decrease estimated variance of revenue can cause an increase in the actual expected revenue. The simulations focus on markets for display advertising that have a mix of CPC and CPA ads. These results should also apply to markets that have only CPA ads, with a variety of definitions of an action and hence a variety of response rates. This is the typical case for CPA-dominated markets in display advertising.

In the simulations, ad response rates $S_i$ are independent of each other. For each ad $i$, let $\ri$ be the actual prior distribution for $S_i$. Let $\riest$ be the estimated prior distribution for $S_i$. Such a distribution can be based on historical response rates for ads. Using Bayesian statistics, let $\diest$ be the estimated posterior distribution for $S_i$. Statistics for this estimated posterior can be computed from the estimated prior $\riest$ and the past performance of ad $i$. (Refer to Berger \cite{berger85} or another text on Bayesian statistics for methods to compute statistics for the estimated posterior.)

Each simulation follows the steps:

\begin{enumerate}
\item Generate ad response rates $S_i$ at random based on actual priors $\ri$.
\item For each ad, randomly generate a series of 100,000 ``learning'' ad calls, with response rate $S_i$, and record the number of responses.
\item For each ad, compute an estimated posterior $\diest$ based on the ad's estimated prior $\riest$ and the number of responses to learning ad calls. (For the method to do this, refer to a textbook on Bayesian statistics, such as Berger \cite{berger85}.)
\item Use QMAP to allocate ad calls over the ads, based on statistics over the estimated posteriors $\diest$.
\item Record the actual expected revenue, $\sum_i k_i S_i b_i$ where $k_i$ is the number of ad calls allocated to ad $i$. This is the expected revenue achieved by the QMAP allocation.
\item For comparison, identify a``single winner'' ad -- an ad with maximum estimated revenue based on the estimated posteriors $\diest$. (The revenue estimate is the bid times the mean of the estimated posterior.) This is the ad that would be selected to receive all ad calls without portfolio allocation, based on the available information. Record its actual expected revenue.
\item Also for comparison, record the ideal expected revenue, $\max_i S_i b_i m$, where $b_i$ is the bid for ad $i$. This is the expected revenue if perfect knowledge of response rates could be used to select an ad with maximum actual expected revenue.
\end{enumerate}

Each simulation computes the QMAP allocation and collects results for all $q$ values in 0 to 1500 with increments of 25, then the values 1750, 2000, 3000, 4000, 5000, 7500, 10,000, 15,000, and 20,000. Each plot shows results averaged over 10,000 simulations.

Each simulation uses 20 ads: 10 CPC ads and 10 CPA ads. The CPC ads have \$1 bids and actual priors $\ri = \mathcal{N}(0.001, 0.0001)$ -- Gaussians with mean $0.001$ and standard deviation $0.0001$.  The CPA ads have \$10 bids and actual priors $\ri = \mathcal{N}(0.0001, 0.00001)$, so that their revenues have the same distributions as the CPC ad revenues.

We ran simulations using three possible estimated priors for response rates $\riest$:

\begin{itemize}
\item \textbf{Uniform} -- The prior is uniform over $[0,1]$. This simulates estimating response rates for each ad based on its own performance history, without using a prior based on histories of other ads.
\item \textbf{Approximate} -- The prior is uniform over $[\mu - 4 \sigma, \mu + 4 \sigma]$, where $\mu$ and $\sigma$ are the mean and standard deviation of the actual priors $\ri$. This simulates using a prior based on histories of other similar ads in combination with each ad's own performance history.
\item \textbf{Exact} -- The prior is the actual distribution used to generate response rates, that is, $\mathcal{N}(0.001, 0.0001)$ for CPC ads and $\mathcal{N}(0.0001, 0.00001)$ for CPA ads. This simulates having exact knowledge of response rate distributions, an ideal that does not occur in practice.
\end{itemize}

Figure \ref{est_rev} compares estimated expected revenue for portfolio allocations and single-winner allocations. As $q$ increases, indicating a preference for revenue maximization over variance minimization, estimated expected revenue for portfolio allocations approaches that for single winners, as expected. The estimated expected revenue is shown as a fraction of of ideal expected revenue -- the expected revenue that would be realized using a single winner if response rates were known. With inexact priors, estimated expected revenues can exceed actual ideal revenues, because there is a tendency to select ads with overestimated response rates. (For more on this ``seller's curse'' effect, refer to Bax et al. \cite{baxromero10}.)

\begin{figure*}
\caption{Estimated Expected Revenue}
\begin{center}
\includegraphics[width=100ex,angle=0]{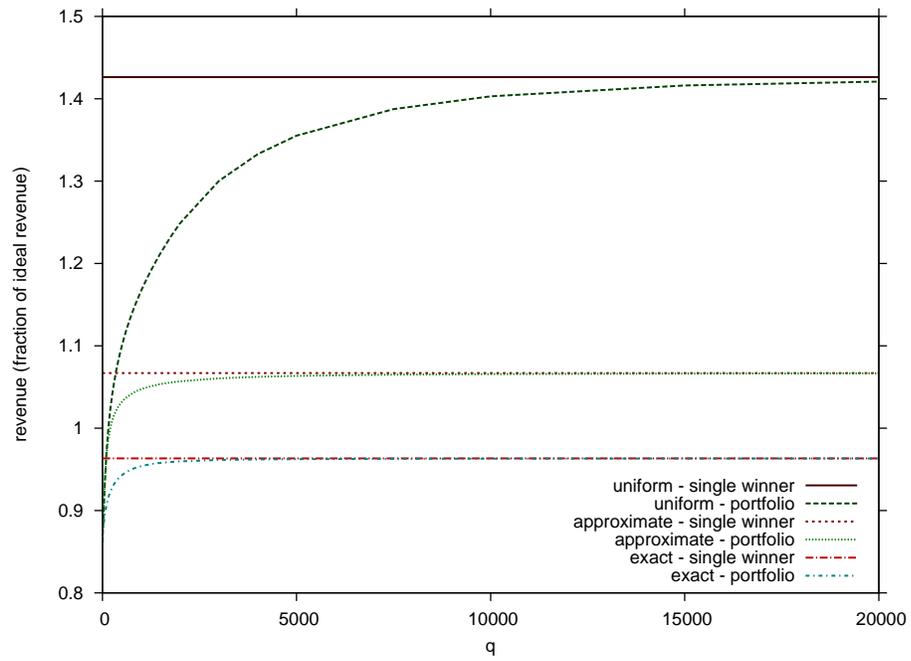}
\end{center}
\label{est_rev}
\end{figure*}

Figures \ref{uni_rev}, \ref{apr_rev}, and \ref{exa_rev} show actual, not estimated, expected revenues. Figures \ref{uni_rev} and \ref{apr_rev} illustrate that using portfolio allocation to control estimated variance can make actual expected revenue higher than for selecting a single ad that maximizes estimated expected revenue. Controlling estimated variance counters the seller's curse of selecting a single winner with overestimated response rate and hence overestimated expected revenue. As in Figure \ref{est_rev}, revenues are shown as fractions of ideal expected revenue. Of course, actual expected revenues do not reach ideal expected revenues, even with exact priors.

\begin{figure*}
  \caption{Actual Expected Revenue -- Uniform Prior}
  \begin{center}
    \includegraphics[width=100ex,angle=0]{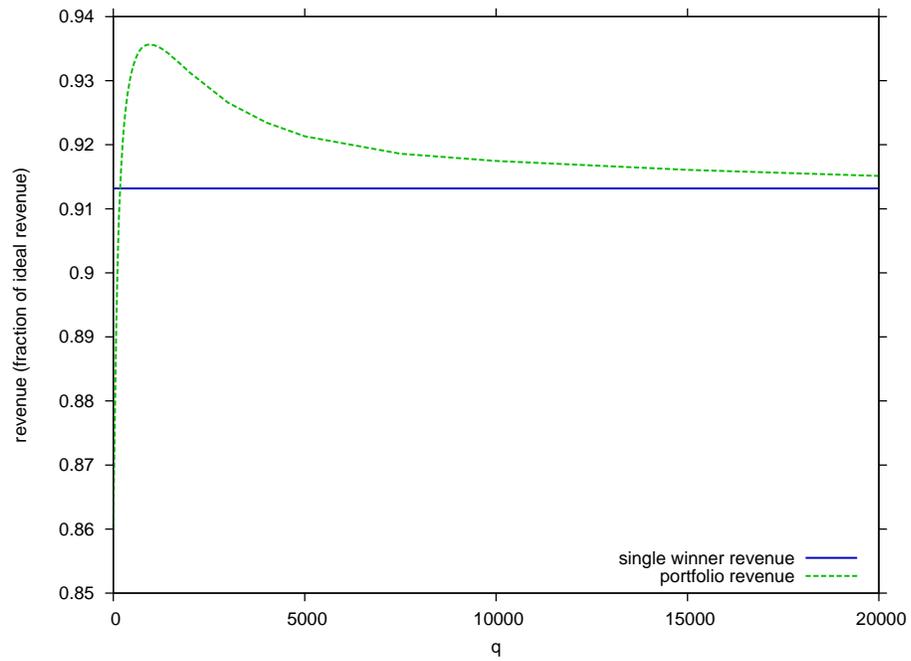}
  \end{center}
  \label{uni_rev}
\end{figure*}

\begin{figure*}
  \caption{Actual Expected Revenue -- Approximate Prior}
  \begin{center}
    \includegraphics[width=100ex,angle=0]{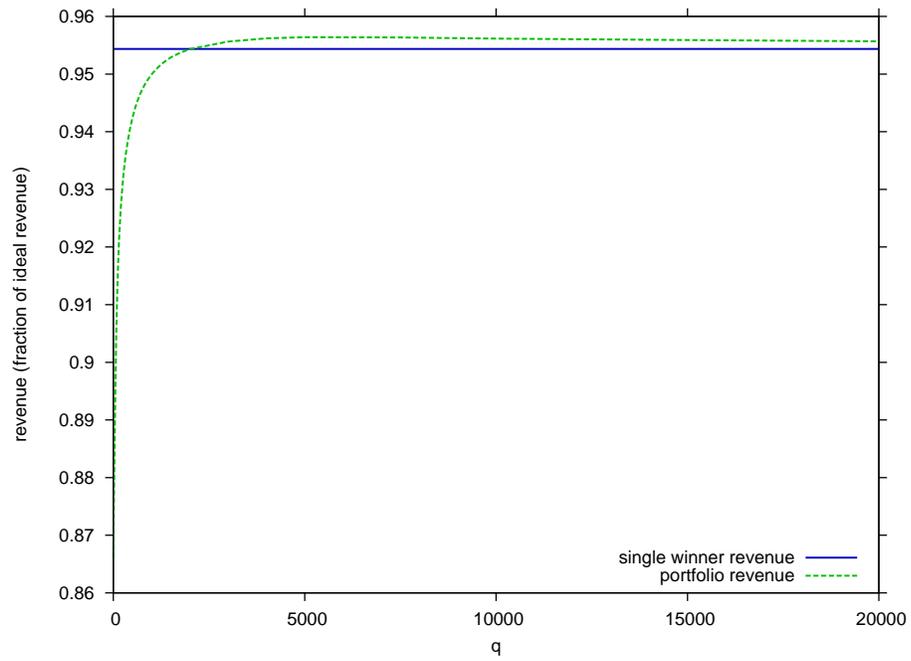}
  \end{center}
  \label{apr_rev}
\end{figure*}

\begin{figure*}
  \caption{Actual Expected Revenue -- Exact Prior}
  \begin{center}
    \includegraphics[width=100ex,angle=0]{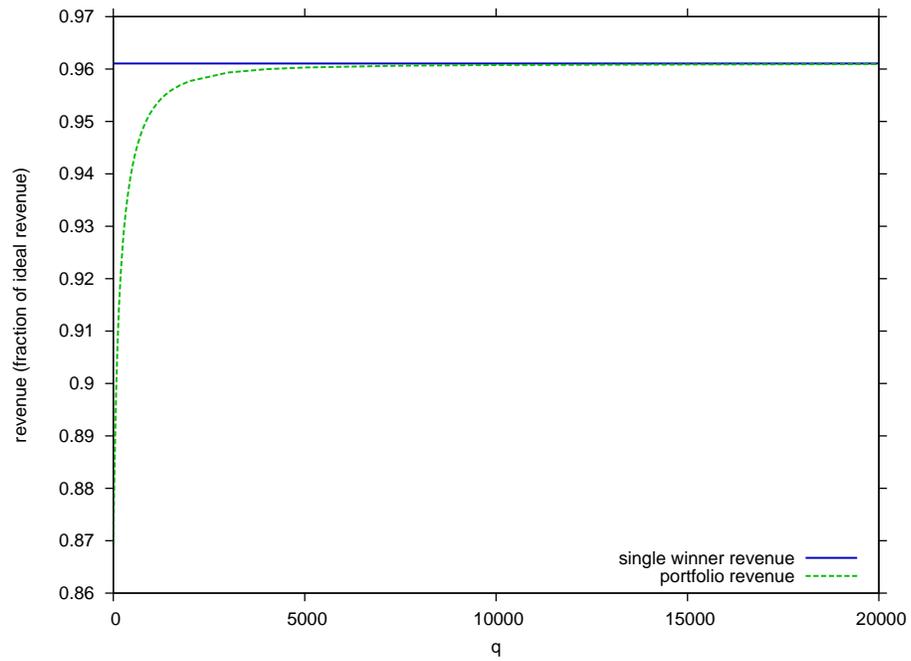}
  \end{center}
  \label{exa_rev}
\end{figure*}

\newpage

These results suggest two steps to improve exchanges for online advertising:
\begin{itemize}
\item Use a Bayesian approach to estimate response probabilities, making the priors as accurate as possible using the best available learning/estimation methods. (Use historical data to categorize ads, deduce the histograms or functional forms of priors, and fit any parameters.)
\item Apply portfolio optimization, experimenting to set $q$ to optimize for a combination of actual expected revenue and, if desired, actual variance of revenue. (For methods to adjust parameters using statistical experiments, refer to Box et al. \cite{box05} or another text on the design of experiments.)
\end{itemize}

\section{Discussion} \label{discussion}
This paper describes a technique to allocate inventory among buyers in online advertising, that mediates a tradeoff between maximizing estimated expected revenue and minimizing estimated variance of revenue (risk). The estimated variance accounts for both uncertainty in estimated response rates and randomness around true response rates. Simulations show that this technique can increase actual expected revenue by preventing the exchange from selecting a single winner based on an overestimate of its payoff.

One direction for future work is to extend the portfolio allocation technique to operate in concert with an explore-exploit method. (For more on explore-exploit methods, also called multi-armed bandit techniques, refer to Bianchi and Lugosi \cite{bianchi06}, Auer et al. \cite{auer02}, Langford et al. \cite{langford02}, Gittins \cite{gittins79a,gittins89}, and Gittins and Jones \cite{gittins79b}.) The technique in this paper can play the role of an exploitation method, offering the side benefit of performing some exploration by allocating ad calls to multiple ads.  One potential approach to extend the portfolio allocation method to include systematic exploration is to add terms to the ad revenue expectations and variances to account for distributions of future revenues due to learning more about response rates. For more on the value of learning, refer to Vermorel and Mohri \cite{vermorel05} and Li et al. \cite{li10}.

Exploration is investing ad calls to learn whether ads' response rates warrant exploiting them by including them in future portfolio allocations. In a sense, exploration to determine an ad's revenue statistics is investing in the option to exploit the ad should it be determined to contribute value to the portfolio. In future research, it would be interesting to explore how this is similar to investing in a call option in a financial market. (For more on options, refer to Hull \cite{hull08}.) In both cases, an upfront investment secures a right to decide whether to make another investment after more information is obtained. In a financial market, the information is revealed over time. In online advertising, the investment buys the information. In both cases, the downside risk is limited to the amount of the initial investment. In a financial market, this occurs when an option is out of the money.

Another direction for future work is to extend methods to accommodate uncertainty in portfolio analysis for financial markets to portfolio analysis for online advertising markets. (For some methods, refer to Jorion \cite{jorion86}, Jobson et al. \cite{jobson79}, and Vasicek \cite{vasicek73}.) It should be useful to apply James-Stein corrections or similar shrinkage methods to estimates of the means, variances, and covariances of revenue distributions for ads. (For more on shrinkage methods, refer to Bock \cite{bock75}, Brown \cite{brown66}, Stein \cite{stein55}, and James and Stein \cite{james61}.)

The field of robust optimization focuses on optimization under uncertainty. Some robust optimization approaches address \textit{strict uncertainty} (see Sniedovich \cite{sniedovich07}), where the probabilities of possible outcomes are completely unknown. Others address less uncertain problems, where the distribution over possible outcomes is unknown but restricted to some set of distributions (see Ben-Tal and Nemirovski \cite{bental98}, Ben-Haim \cite{benhaim05}, and Chen et al. \cite{chen07}).  In this paper, we began by examining the effect on \textit{risk} (as defined by French \cite{french88}) of drawing a distribution over outcomes (corresponding to $\mathbf{S}$) from a distribution over distributions (corresponding to $\mathbf{R}$). Then we used simulations to explore the effect of having imperfect information about the distribution over distributions. This introduces a form of uncertainty beyond risk, but not as severe as strict uncertainty. In the future, it would be interesting to apply the methods of robust optimization to ad allocation problems with uncertainty about the parameters (such as the number of ad calls available) as well as the payoffs. For more on robust optimization for portfolio problems, refer to Fabozzi et al. \cite{fabozzi07} and Goldfarb and Iyengar \cite{goldfarb03}.

Finally, it would be interesting to analyze questions about bidding behavior under portfolio allocations. For example, could some advertisers benefit by switching price types from CPC to CPA or vice versa to adjust the variance of their ads' revenues under portfolio allocation? Under a mixed allocation, there are multiple winners, so what is the second price? One answer to the latter question is VCG (Vickrey-Clarke-Groves \cite{groves73,clarke71,vickrey62,vickrey61}) pricing for portfolio allocations, as outlined by Li et al. in \cite{vcg_portfolio}. But there may be other approaches that maintain incentives to bid truthfully and increase revenue.

\appendix
\section{Variance of Allocation Payoff} \label{payoff_variance}
\newtheorem{theorem}{Theorem}
\begin{theorem}
\be
Var_{\SX} \rksx
\ee
\be
= \sum_{i=1}^{n} \sum_{j=1}^{n} k_i k_j Cov_{S_i,S_j} \left[ E_{X_i} X_i(S_i), E_{X_j} X_j(S_j) \right] 
\ee
\be
+ \sum_{i=1}^{n} k_i E_{S_i} Var_{X_i} \xsi.
\ee
\noindent
\end{theorem}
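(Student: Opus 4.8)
The plan is to apply the law of total variance (the conditional variance formula), conditioning on the vector of response rates $\SSS$. This is the natural decomposition because the model is hierarchical: $\SSS$ is drawn once according to the priors $R_i$, and then, conditional on $\SSS$, each revenue variable $\xhi$ is drawn independently. Conditioning on $\SSS$ therefore separates exactly the two sources of variation the theorem isolates --- the ``uncertainty'' carried by $\SSS$ and the ``randomness'' carried by $\XX$ given $\SSS$.

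First I would write
\be
Var_{\SX} \rksx = Var_{\SSS} \left( E_{\XX \mid \SSS} \rksx \right) + E_{\SSS} \left( Var_{\XX \mid \SSS} \rksx \right).
\ee
The next step is to evaluate the two inner quantities, each of which is a function of $\SSS$ alone. For the conditional expectation, linearity gives $E_{\XX \mid \SSS} \rksx = \sum_{i=1}^n k_i E_{X_i} \xsi$, since each of the $k_i$ copies allocated to ad $i$ has the same conditional mean $E_{X_i} \xsi$. For the conditional variance, I would invoke conditional independence: given $\SSS$, all the $\xhi$ are mutually independent, so the variance of their sum is the sum of their variances, yielding $Var_{\XX \mid \SSS} \rksx = \sum_{i=1}^n k_i Var_{X_i} \xsi$.

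Substituting back, the second term becomes $\sum_{i=1}^n k_i E_{S_i} [ Var_{X_i} \xsi ]$ at once, which is the randomness sum. For the first term I would expand the variance of the sum $\sum_{i=1}^n k_i E_{X_i} \xsi$ using bilinearity of covariance, producing the double sum $\sum_{i=1}^n \sum_{j=1}^n k_i k_j \covij$, the uncertainty term. Adding the two pieces recovers the claimed identity.

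The main obstacle is justifying the conditional-independence step carefully: the cross terms in the conditional variance vanish only because, given $\SSS$, revenues for distinct ad calls --- whether allocated to the same ad or to different ads --- are independent. I would pin this down by appealing to the model's specification that the $\xhi$ are independent copies of $\xsi$, redrawn i.i.d. for each ad call with the draw governed entirely by $S_i$; once we condition on $\SSS$, no shared randomness remains to couple them, so all covariances inside $Var_{\XX \mid \SSS}$ are zero. The remaining manipulations --- linearity, the covariance expansion, and pushing the outer $E_{\SSS}$ through a finite sum --- are routine.
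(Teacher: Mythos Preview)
Your proposal is correct and is essentially the same decomposition the paper uses: the paper's three-term expansion via $Var\,X = E X^2 - (E X)^2$ and $E X^2 = Var\,X + (E X)^2$ is just the law of total variance derived from scratch, after which its treatment of the conditional mean and conditional variance matches yours step for step. The only difference is packaging --- you invoke the conditional variance formula by name, while the paper re-derives it inline --- so your write-up is slightly cleaner but not a different argument.
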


\begin{proof}
Use the well-known equality \cite{feller68} for variance: $Var X = E X^2 - (E X)^2$:

\be
Var_{\SX} \rksx = E_{\SX} \rksx^2 - \left[ E_{\SX} \rksx \right]^2.
\ee
\noindent
For the first term, separate expectations for $\SSS$ and $\XX$, and apply the equality $E X^2 = Var X + (E X)^2$:

\belabel
= E_{\SSS} \left[ Var_{\XX}  \rksx^2 \right] 
\ee
\be
+ E_{\SSS} \left[E_{\XX} \rksx\right]^2 
\ee
\be
- \left[E_{\SX} \rksx\right]^2. \label{three_terms}
\eelabel
\noindent

Now expand the three terms one at a time. For the first term, use the definition of $\rksx$:

\be
E_{\SSS} \left[ Var_{\XX}  \rksx^2 \right] 
\ee
\be
= E_{\SSS} Var_{\XX} \sum_{i=1}^{n} \left[\sum_{h=k_1 + \ldots + k_{i-1}}^{k_1 + \ldots + k_i} \xhi \right].
\ee
\noindent
Since payoffs are i.i.d. with respect to $\XX$,

\be
E_{\SSS} \left[ Var_{\XX}  \rksx^2 \right]  
\ee
\be
= E_{\SSS} \sum_{i=1}^{n} k_i Var_{X_i} \xsi 
\ee
\be
= \sum_{i=1}^{n} k_i E_{S_i} \left[ Var_{X_i} \xsi \right].
\ee
\noindent
This is the last term on the RHS of the equation in the statement of the theorem.

Next, expand the second term of Equation (\ref{three_terms}). Use the definition of $\rksx$.

\be
E_{\SSS} \left[E_{\XX} \rksx\right]^2 
\ee
\be
= E_{\SSS} \left[\left(E_{\XX} \sum_{i=1}^{n} \left[\sum_{h=k_1 + \ldots + k_{i-1}}^{k_1 + \ldots + k_i} \xhi \right] \right) \right.
\ee
\be
\left. \left(E_{\XX} \sum_{j=1}^{n} \left[ \sum_{g=k_1 + \ldots + k_{j-1}}^{k_1 + \ldots + k_j} X_{gj} (S_j)\right]\right)\right].
\ee
\noindent
Since payoffs are i.i.d. with respect to $\XX$,

\be
= E_{\SSS} \left[\sum_{i=1}^{n} k_i E_{X_i} \xsi\right] \left[\sum_{j=1}^{n} k_j E_{X_j} X_j (S_j)\right].
\ee
\noindent
Distribute $E_{\SSS}$ and multiply the sums term-by-term.

\belabel
= \sum_{i=1}^{n} \sum_{j=1}^{n} k_i k_j E_{\SSS} \left(E_{X_i} \xsi \cdot E_{X_j} X_j (S_j)\right). \label{canceler}
\eelabel
\noindent
Now expand the third term of Equation (\ref{three_terms}). Substitute in the expectation of $\rksx$ from Equation (\ref{expectation}).

\be
- \left[E_{\SX} \rksx\right]^2 = - \left[\sum_{i=1}^{n} k_i E_{S_i, X_i} \xsi\right]^2
\ee
\noindent
Expand the square.

\be
= - \sum_{i=1}^{n} \sum_{j=1}^{n} k_i k_j E_{S_i, X_i} \xsi E_{S_j, X_j} X_j (S_j).
\ee
\noindent
Apply the equality for covariance \cite{feller68}: $Cov (X,Y) = E XY - (E X) (E Y)$.

\be
= - \sum_{i=1}^{n} \sum_{j=1}^{n} k_i k_j \left[-E_{S_i, S_j} \left(E_{X_i} \xsi \cdot E_{X_j} X_j (S_j)\right) \right.
\ee
\be
\left. + Cov_{S_i, S_j} \left(E_{X_i} \xsi, E_{X_j} X_j (S_j)\right)\right]
\ee
\noindent
Carry through the sign and separate the expectation and covariance terms.

\be
= - \sum_{i=1}^{n} \sum_{j=1}^{n} k_i k_j E_{S_i, S_j} \left(E_{X_i} \xsi \cdot E_{X_j} X_j (S_j)\right) 
\ee
\be
+ \sum_{i=1}^{n} \sum_{j=1}^{n} k_i k_j Cov_{S_i, S_j} \left(E_{X_i} \xsi, E_{X_j} X_j (S_j)\right).
\ee
\noindent
The first term cancels Equation (\ref{canceler}). The second term completes the RHS of the statement of the theorem.
\end{proof}

\bibliography{bax}
\bibliographystyle{abbrv}

\end{document}